\documentclass[10pt]{article}
\usepackage{epsfig}
\usepackage{graphicx}
\usepackage{color}
\usepackage{amsmath,amsthm,amssymb}
\usepackage{latexsym}
\newtheorem{theor}{Theorem}[section]

\newtheorem{lemma}[theor]{Lemma}

\usepackage{hhline}

\setlength{\textheight} {9.3 in} \setlength{\textwidth} {6 in}
\voffset -1 in \hoffset -0.5 in \topmargin .7 in
\setlength{\evensidemargin} {0.6 in} \setlength{\oddsidemargin}{0.4
in} \setlength {\columnsep}{6 mm} \baselineskip 8 mm

\usepackage[pass, textheight=538pt,textwidth=318pt]{geometry}

\begin{document}

\title{Rainbow path and color degree in edge colored graphs}

\date{}

\author {Anita Das$^*$, P. Suresh$^\dag$, S. V. Subrahmanya \footnote{{E-Comm Research Lab, Education \& Research, Infosys Limited,
Bangalore, India. \newline \hspace{.5cm} $^\dag$E-Comm Research Lab,
Education \& Research, Infosys Limited, Mysore, India.
{\hspace{10cm}}{\bf Email:}(anita\_das01, suresh\_p01,
subrahmanyasv)@infosys.com } } }

\maketitle

\begin{abstract}
Let $G$ be an edge colored graph. A {\it}{rainbow path} in $G$ is a path in
which all the edges are colored with distinct colors.   Let $d^c(v)$ be the color degree of a vertex $v$ in $G$, 
i.e. the number of distinct colors present on the edges incident on the vertex $v$. Let $t$
be the maximum length of a rainbow path in $G$.  
Chen and Li showed that  
 if $d^c \geq k$, for every vertex $v$ of $G$, then $t \geq \left \lceil \frac{3 k}{5}\right \rceil + 1$
(Long heterochromatic paths in edge-colored graphs, The Electronic Journal of Combinatorics 12 (2005), \#  R33, Pages:1-33.)
Unfortunately, proof by Chen and Li is very long and comes to about 23 pages in the journal version.
Chen and Li states in their paper that it was conjectured by Akira Saito, that $t \ge \left \lceil \frac {2k} {3} \right \rceil$.
They also states in their paper that they believe $t \ge k - c$ for some constant $c$. 
 
In this note, we give a  short proof  to show that $t \ge \left \lceil \frac{3 k}{5}\right \rceil$, using
an entirely different method. Our proof is only
about 2 pages long. The draw-back is that our bound is less by 1, than the bound given by Chen and Li. 
We hope that the new approach adopted in this paper would eventually lead to the settlement of the conjectures by
Saito and/or Chen and Li. 
\vspace{.4cm}

\noindent{\bf keywords:} Edge colored graphs, rainbow path, color degree.
\end{abstract}

\section{Introduction}

Given a graph $G = (V, E)$, a map $c : E \rightarrow N$ ($N$ is the
set of non-negative integers) is called an {\it edge coloring} of
$G$. A graph G with  such a coloring $c$  is called
an {\it edge colored} graph. We denote the color of an edge $e \in
E(G)$ by $color(e)$. For a vertex $v$ of $G$, the {\it}{color
neighborhood $CN(v)$} of $v$ is defined as the set $\{color(e) | e$
is incident on $v\}$ and the {\it}{color degree} of $v$, denoted by
$d^c(v)$ is defined to be  $d^c(v) = |CN(v)|$.

A path in an edge colored graph with no two edges sharing the same
color is called a {\it rainbow path}. Similarly, a cycle in an edge
colored graph is called a {\it rainbow cycle} if no two edges of the
cycle share the same color. A survey on rainbow paths, cycles and
other rainbow sub-graphs can be found in \cite{kano}. Several
theorems and conjectures on rainbow cycles can be found in a paper
by Akbari, Etesami, Mahini and Mahmoody in \cite{AEMM}. 

Let $t$ denote the length of the maximum length rainbow path in $G$. In \cite{CL},
Chen and Li  studied the maximum length rainbow path problem in edge-colored graphs
and proved that if $G$ is an edge colored graph with $d^c(v) \geq k$, for every vertex $v$ of $G$, then $G$ has
a rainbow path of length at least $ \left \lceil \frac{3 k}{5}\right \rceil + 1$.
Chen and Li states in their paper that it was conjectured by Akira Saito, that $t \ge \left \lceil \frac {2k} {3} \right \rceil$.
They also states in their paper that they believe $t \ge k - c$ for some constant $c$, after showing
an example where the rainbow path cannot be more than $k-1$.

In this note, we give a  short proof  to show that $t \ge \left \lceil \frac{3 k}{5}\right \rceil$, using
an entirely different method. Our proof is only
about 2 pages long. The draw-back is that our bound is less by 1, than the bound given by Chen and Li. 
We hope that the new approach adopted in this paper would eventually lead to the settlement of the conjectures by
Saito and/or Chen and Li.

\subsection{Preliminaries}

All graphs considered in this paper are finite, simple and
undirected. A graph is a tuple $(V, E)$, where $V$ is a finite set
of vertices and $E$ is the set of edges. For a graph $G$, we use
$V(G)$ and $E(G)$ to denote its vertex set and edge set,
respectively. The neighborhood $N(v)$ of a vertex $v$ is the set of
vertices adjacent to $v$ but not including $v$. The degree of a
vertex $v$ is $d_v = |N(v)|$. A path is a non-empty graph $P=(V,E)$
of the form $V = \{p_1, p_2, \ldots , p_k\}$ and $E = \{ (p_1,
p_2), (p_2, p_3), \ldots , (p_{k-1}, p_k)  \}$, which we usually
denote by the sequence $\{p_1, p_2, \ldots, p_k\}$. The {\it length}
of a path is its number of edges. If $P = \{p_1, p_2, \ldots ,
p_k\}$ is a path, then the graph $C $ with $V(C) = V(P)$ and
$E(C)= P \cup \{(p_k, p_1)\}$ is a
cycle, and $|E(C)|$ is the length of $C$. We represent this cycle by
the cyclic sequence of its vertices, for example $C = \{p_1, p_2,
\ldots, p_k, p_1\}$.

\section{Proof of the main results}

Let $G$ be an edge colored graph with $d^c(v) \geq k$, for every vertex $v$ of $G$ and $t$ be the maximum length
rainbow path in $G$. Let  $C$ denote  the set of
colors used in the edge coloring of $G$. The
following lemma ensures a rainbow path of length
$\left\lceil\frac{k+1}{2}\right\rceil$ starting from any vertex
in an edge colored graph.

\begin{lemma}\label{lemma1}
Let $G$ be an edge colored graph and $d^c(v) \geq k$, for every $v
\in V(G)$. Then, given any vertex $x$ in $G$ there exists a rainbow
path of length at least $\left \lceil\frac{k+1}{2} \right \rceil$
starting from $x$.
\end{lemma}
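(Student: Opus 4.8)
The plan is to take a rainbow path $P = u_0 u_1 \cdots u_\ell$ with $u_0 = x$ that is \emph{longest} among all rainbow paths having $x$ as one endpoint, and to show that its length $\ell$ already satisfies $\ell \ge \left\lceil \frac{k+1}{2}\right\rceil$. Since $d^c(x) \ge k \ge 1$, the vertex $x$ is incident to at least one edge, so such a path with $\ell \ge 1$ exists. Write $Col(P)$ for the set of colors appearing on the edges of $P$; because $P$ is rainbow, $|Col(P)| = \ell$.

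Next I would examine the colors present at the far endpoint $u_\ell$. There are at least $d^c(u_\ell) \ge k$ of them, and I split them into those lying in $Col(P)$ and those not. The first class obviously contributes at most $|Col(P)| = \ell$ colors. For the second class, the key observation is: if an edge $u_\ell w$ carries a color \emph{not} in $Col(P)$, then $w$ must be a vertex of $P$; otherwise $u_0 u_1 \cdots u_\ell w$ would be a rainbow path of length $\ell+1$ with endpoint $x$, contradicting the choice of $P$. Moreover $w \neq u_\ell$ since $G$ is simple, and $w \neq u_{\ell-1}$ since the edge $u_{\ell-1} u_\ell$ lies on $P$ and hence has its color in $Col(P)$. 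Therefore $w \in \{u_0, u_1, \dots, u_{\ell-2}\}$, a set of $\ell - 1$ vertices, and as $G$ is simple each such vertex sends exactly one edge to $u_\ell$; so the second class contributes at most $\ell - 1$ colors.

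Combining the two bounds gives $k \le d^c(u_\ell) \le \ell + (\ell - 1) = 2\ell - 1$, so $\ell \ge \frac{k+1}{2}$, and since $\ell$ is an integer, $\ell \ge \left\lceil \frac{k+1}{2}\right\rceil$. The path $P$ itself is then a rainbow path of the required length starting at $x$, completing the argument.

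I do not expect a genuine obstacle here; the proof is essentially a one-shot counting argument at the endpoint of an extremal path. The only place that needs care is the bookkeeping at $u_\ell$: one must make sure the path edge $u_{\ell-1}u_\ell$ is excluded from the ``new color'' class, and that simplicity of $G$ forbids more than one edge from a given $u_i$ to $u_\ell$, so that $\ell - 1$ vertices yield at most $\ell - 1$ new colors. A minor side issue is the degenerate situation $k \le 0$ or $x$ isolated, which is either vacuous or ruled out by the hypothesis $d^c(x) \ge k$.
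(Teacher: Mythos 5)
Your proof is correct, and the underlying count is the same one the paper uses: an extremal rainbow path plus a split of the $\ge k$ colors at an endpoint into ``colors already on the path'' (at most the path length) and ``new colors'' (whose edges must land on interior path vertices, giving at most length minus one), yielding $k \le 2\ell-1$ and hence $\ell \ge \left\lceil \frac{k+1}{2}\right\rceil$. The genuine difference is in the choice of extremal object and which end you count at. The paper takes a rainbow path of maximum length in all of $G$, writes the given vertex as its endpoint $x=u_0$, and does the counting at $x$, forbidding extensions $z,x,u_1,\dots$ by appealing to global maximality; read literally, that only bounds the length of a globally longest rainbow path and does not by itself produce a long rainbow path starting at an \emph{arbitrary} prescribed vertex $x$. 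You instead maximize over rainbow paths having $x$ as an endpoint and do the counting at the far end $u_\ell$, where the extension $u_0\cdots u_\ell w$ still starts at $x$ and so genuinely contradicts your choice of $P$. That is exactly the form of the lemma needed later (it is applied to $G''$ at a specified attachment vertex), so your variant is the cleaner and more rigorous route to the statement as written, at no extra cost; the bookkeeping ($w\notin\{u_\ell,u_{\ell-1}\}$, simplicity giving one edge per $u_i$) mirrors the paper's $N_P(x)$ versus $N_{P^c}(x)$ split.
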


\begin{proof}
Let $P$ be a rainbow path in $G$ of maximum length, say $t$. Thus
$P$ contains $t+1$ vertices, $t$ edges and hence $t$ distinct
colors. Let $P = \{x = u_0, u_1, \ldots,
u_t=y\}$, $U = \{color(u_i,u_{i+1}), 0 \leq i \leq t-1)$ and $U^c =
C \setminus U$.
Select a subset $E(x)$ of edges incident on $x$ as follows: Let
$(x,u_1) \in E(x)$. Now, select $k-1$ more edges incident 
on $x$ and add to $E(x)$ so that all the $k$ edges in $E(x)$ have
different colors. Clearly it is possible to do this, since $d^c(x) \ge k$.  

 Let  $N_P(x) = \{ u_i \ : \ 1 \leq i \leq t \
\mbox{ and } (x,u_i) \in E(x)\}$, $N_{P^c}(x) = \{ a \in V(G)
\setminus V(P) \ : \ (x, a) \in E(x)\}$.

\vspace{.2cm}

\noindent {\bf Claim.} If $z \in N_{P^c}(x)$, then $color(x,z) \in
U- \{ color(x,u_1) \} $.

Suppose that $color(x,z) \in U^c$. As $z \notin V(P)$, $P' = \{z,
x=u_0, u_1, \ldots, u_t=y\}$ is a path. Moreover, $P'$ is a rainbow
path as $color(x, z) \in U^c$. Now $P'$ is a rainbow path in $G$ of
length $t+1$. This is a contradiction to the fact that $t$ is the length of the
maximum length rainbow path in $G$.
Also it is obvious that $color(x,z) \ne color(x,u_1)$. Hence the claim is true.

From the above claim  we infer that
$N_{P^C} (x)| \leq |U| - 1 = t - 1$. So, $|N_P(x)| \geq
k - (t - 1)$. But, number of vertices in $P$
excluding $x$ is $t$. So, $k - (t - 1) \leq t$. Hence, $t \geq
\frac{k+1}{2}$. Since $t$ is an integer, we have $t \geq \left
\lceil \frac{k+1}{2} \right \rceil$.
\end{proof}

The following lemma ensures that if the maximum length of a rainbow
path is small enough, then we can convert the maximum rainbow path
into a rainbow cycle by some simple modifications.

\begin{lemma}\label{lemma2}
Let $G$ be an edge colored graph and $d^c(v) \geq k$, for every $v
\in V(G)$. Let $t$ be the  length of the maximum length rainbow path in $G$. If $t
< \left\lceil \frac{3}{5}k \right \rceil $, then $G$ contains a
rainbow cycle of length $(t+1)$.
\end{lemma}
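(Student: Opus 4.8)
\medskip\noindent\textbf{Proof plan.}\quad Let $P=\{u_0,u_1,\dots,u_t\}$ be a rainbow path of maximum length, let $U$ be its set of $t$ colours, and write $x=u_0$, $y=u_t$, $U^c=C\setminus U$. The idea is to obtain the cycle from $P$ by a ``crossing'' surgery, in the style of a P\'osa rotation. First I would rerun the argument from the proof of Lemma~\ref{lemma1} at both endpoints: fixing a set $E(x)$ of $k$ differently coloured edges at $x$, every edge of $E(x)$ leaving $P$ has colour in $U$, so the $\ge k-t$ edges of $E(x)$ whose colour lies in $U^c$ all end on $P$. Put $N_x=\{\,j:xu_j\in E(x),\ color(xu_j)\in U^c\,\}$ and define $N_y$ symmetrically at $y$; then $|N_x|,|N_y|\ge k-t$, and $N_x\subseteq\{2,\dots,t\}$, $N_y\subseteq\{0,\dots,t-2\}$. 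If $t\in N_x$ or $0\in N_y$ the edge $xy$ already closes $P$ into a rainbow cycle of length $t+1$, so we may assume $N_x\subseteq\{2,\dots,t-1\}$ and $N_y\subseteq\{1,\dots,t-2\}$.

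Call an index $i$ a \emph{crossing} if $i\in N_y$ and $i+1\in N_x$. For a crossing $i$ the cycle $C_i=\{u_0,u_1,\dots,u_i,u_t,u_{t-1},\dots,u_{i+1},u_0\}$ has exactly $t+1$ vertices and $t+1$ edges, and its colour multiset is $\bigl(U\setminus\{color(u_iu_{i+1})\}\bigr)\cup\{color(xu_{i+1}),\,color(yu_i)\}$. Since $color(xu_{i+1})$ and $color(yu_i)$ both lie in $U^c$, they differ from the $t-1$ retained colours of $U$; hence $C_i$ is a rainbow cycle of length $t+1$ precisely when $color(xu_{i+1})\ne color(yu_i)$. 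So it suffices to produce one crossing whose two defining edges receive different colours.

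Crossings are plentiful. As $N_x$ and $\{m+1:m\in N_y\}$ both lie in the $(t-2)$-set $\{2,\dots,t-1\}$, the number of crossings is at least $|N_x|+|N_y|-(t-2)\ge 2(k-t)-(t-2)=2k-3t+2$. From $t<\lceil 3k/5\rceil$ we get $t<3k/5$, i.e. $5t<3k$, so $2k-3t+2>\tfrac{k}{5}+2$: there are more than $k/5$ crossings. If any one of them is ``good'' ($color(xu_{i+1})\ne color(yu_i)$) we are done. Note that mere positivity of this count needs only $t<\tfrac{2k+2}{3}$; the stronger hypothesis is present precisely to yield a \emph{surplus} of crossings, which is what the next step consumes.

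The remaining case — every crossing $i$ satisfies $color(xu_{i+1})=color(yu_i)$ — is the heart of the matter, and the step I expect to be hardest. My plan is to feed a ``bad'' crossing $i$ back into the construction by rotating $P$ across it: replace $P$ by the still-maximum rainbow path $P'=\{u_i,u_{i-1},\dots,u_0,u_{i+1},\dots,u_t\}$, whose colour set agrees with $U$ except that $color(u_iu_{i+1})$ is traded for the fresh colour $color(xu_{i+1})$. Running the analysis on $P'$, the endpoint $y$ retains $\ge k-t-1$ fresh neighbours on $P'$ (only the colour $color(yu_i)=color(xu_{i+1})$ is lost) while the new endpoint $u_i$ acquires $\ge k-t$ of its own, so $P'$ again has many crossings, and a good crossing of $P'$ still yields a rainbow $(t+1)$-cycle. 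Iterating, each rotation removes one colour from the $U^c$-palette used near an endpoint, while the surplus $|N_x|\ge k-t>\tfrac25k$ forced by $t<\lceil 3k/5\rceil$ bounds how long this can continue; I would then argue that once the process stalls, the ``all crossings bad'' condition forces so many colour coincidences on the edges at some vertex that $d^c$ falls below $k$, a contradiction. (When $k$ is small, or more generally when $N_x$ or $N_y$ contains two indices flanking a common crossing, one can instead write down an explicit rainbow $(t+1)$-cycle using three $U^c$-coloured chords matched to three \emph{distinct} bad crossings, so that their colours stay distinct; the constant $3/5$, rather than $2/3$, is exactly what makes one of these structural situations unavoidable.) Carrying out this final step cleanly is the main obstacle I foresee.
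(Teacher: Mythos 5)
Your setup coincides with the paper's: your $N_x$, $N_y$ are its $T_x$, $T_y$, your ``crossings'' are its set $M_x\cap T_y$, the count showing crossings exist is its Claim~1, and the observation that a crossing with $color(x,u_{i+1})\ne color(y,u_i)$ closes a rainbow $(t+1)$-cycle is its Claim~2. Up to that point your argument is correct (and only needs $t<\frac{2k+1}{3}$, as you note). But the case you yourself flag as ``the heart of the matter'' --- every crossing $i$ satisfying $color(x,u_{i+1})=color(y,u_i)$ --- is exactly where the hypothesis $t<\lceil 3k/5\rceil$ has to be spent, and your proposal does not prove it. The rotation plan is only a sketch: after replacing $P$ by $P'=\{u_i,\dots,u_0,u_{i+1},\dots,u_t\}$ you would have to redo the whole crossing analysis with a changed colour set $U'$, and you give no potential function, no termination argument, and no concrete derivation of a contradiction when ``the process stalls''; the mere surplus of more than $k/5$ bad crossings does not by itself force $d^c$ below $k$ at any vertex. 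The parenthetical alternative (three $U^c$-chords matched to three distinct bad crossings giving an explicit rainbow cycle) is likewise asserted, not constructed, and it is not clear such a configuration exists. So the proposal has a genuine gap precisely at the step that carries the difficulty.

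For comparison, the paper resolves this case without any rotation, by a static counting argument concentrated at the endpoint $y$. It partitions the bad crossings $u_i\in M_x\cap T_y$ according to whether the coincident colour $color(u_i,u_{i+1})$ reappears on an edge at $y$ (sets $A$ and $B$); each colour in $B$ is unavailable at $y$, which boosts the bound to $|T_y|\ge k-t+|B|$. For $u_i\in A$ the witness edge $(y,w)$ must end on $P$ (else a longer rainbow path), and $A$ is split by whether $w\in M_x$ ($A_1$) or not ($A_2$): the $A_2$-witnesses inject into a set $S_y$ disjoint from $M_x\cup T_y$, while a pairing argument inside $M_x$ gives $|A_1|\le |M_x|/2$. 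Feeding these into the disjointness inequality $|M_x|+|T_y|+|S_y|-|M_x\cap T_y|\le t-1$ yields $t\ge\frac{3k+2}{5}$, contradicting $t<\lceil 3k/5\rceil$. Supplying an argument of this kind (or making your rotation scheme rigorous) is what is missing from your write-up.
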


\begin{proof}
Assume for contradiction that there is no rainbow cycle of length
$t+1$ in $G$. Let $P = \{u_0(=x), u_1, u_2, \ldots, u_{t}(=y)\}$  be
a rainbow path of length $t$ in $G$. Let $U = \{color(u_i,
u_{i+1})$, $0 \leq i \leq t-1\}$ and $U^c = C \setminus U$, where
$C$ is the set of colors used to color the edges of $G$. Clearly
$|U| = t$. Let $T_x = \{ u_i  \ : 0 \leq i \leq t, \ (x, u_i) \in
E(G)$ and $color(x, u_i) \in U^c\}$ and let $T_y = \{ u_i \ : \ 0
\leq i \leq t , \ (y, u_i) \in E(G)$ and $color(y, u_i) \in U^c\}$.

First note that, $|\{ (x,z) \in E(G): color(x,z) \in U^c \}| \ge k -
t$. Moreover, if $(x,z) \in E(G)$ with $color(x,z) \in U^c$, then $z
\in V(P)$, i.e., $z = u_i$ for some $1 \le i \le t$, since otherwise
we would have a rainbow path of length $t+1$ in $G$. It follows that
$|T_x| \geq k - t$. By a similar argument, we get $|T_y| \ge k - t$.
Note that $u_0, u_1 \notin T_x$ since $u_0=x$ and color($x,u_1) \in
U$. Also, $u_t \notin T_x$, since if $(x,u_t)$ is an edge and is
colored using a color from $U^c$, then we already have a $t+1$
length rainbow cycle, contrary to the assumption. So, we can write
$T_x = \{ u_i \ : 2 \le i \le t-1  \mbox { and } color(x,u_i) \in
U^c \}$. By similar reasoning, we can write, $T_y = \{ u_i \ : 1 \le
i \le t-2  \mbox { and } color(y,u_i) \in U^c \}$. Define $M_x =
\{u_j \ : \ u_{j+1} \in T_x\}$.

\vspace{.2cm}

\noindent  {\bf Observation 1.}
 $|M_x| = |T_x|  \geq k - t$.

\vspace{.2cm}

\noindent {\bf Claim 1.} $M_x \cap T_y \neq \emptyset$.

If possible suppose $M_x \cap T_y = \emptyset$. Now, $|M_x| + |T_y|
\leq t - 1$, as both $M_x \subset V(P)$ and $T_y \subset V(P)$  and
number of vertices on $P$ excluding $x$ and $y$ is $t-1$. (Note that,
$x,y \notin M_x$ and $x,y \notin T_y$.) As $|M_x| \geq k - t$
and $|T_y| \geq k - t$ and $M_x \cap T_y = \emptyset$ by
assumption, we have $k - t + k - t \leq t-1$. That is, $2 k \leq 3t - 1$. So, $t \geq \frac{2 k+1}{3}$. This is a
contradiction to the fact that $t < \left\lceil \frac{3}{5}k
\right\rceil $. Hence Claim 1 is true.

\vspace{.2cm}

\noindent {\bf Claim 2.} If $u_i \in M_x \cap T_y$, then $color(y,
u_i) = color(x, u_{i+1})$.

Suppose Claim 2 is false. That is, $ \exists u_i \in M_x \cap
T_y$ such that  $color(y, u_i) \neq color(x, u_{i+1})$. Now consider the
cycle: $CL = \{x, u_1, \ldots, u_i, y, u_t, u_{t-1},$ $\ldots,
u_{i+1}, x\}$. Clearly $CL$ is a rainbow cycle, as $color(y, u_i)
\neq color(x, u_{i+1})$,  $color(y, u_i) \in U^c$ and $color(x,
u_{i+1}) \in U^c$. Note that the length of $CL$ is $t+1$, as we
removed exactly one edge, namely $(u_i, u_{i+1})$ from $P$ and added
two new edges, namely $(y, u_i)$ and  $(x, u_{i+1})$ to $CL$. So, the
length of $CL$ is $t-1+2=t+1$, contradiction to the assumption.
Hence, we can infer that if $u_i \in M_x \cap T_y$, then $color(y,
u_i) = color(x, u_{i+1})$.

Let $S_y = \{  v \in V(P) - (M_x \cup \{ y, u_{t-1} \} )     \ : \
color(y, v) \in U  \}  \} $.

\vspace{.2cm}

\noindent {\bf Observation 2.}  $|M_x| + |T_y| + |S_y| - |M_x
\cap T_y| \leq t-1$.

Proof: This is because  $S_y$ is disjoint from $M_x \cup T_y$ and
$S_y \cup M_x \cup T_y \subseteq V(P) - \{ y, u_{t-1} \}$. (Note
that  $y (= u_t)$ and $u_{t-1}$ do not appear in $M_x$, $T_y$ or
$S_y$.)

\vspace{.1cm}

We partition the set $M_x \cap T_y$ as follows. Let $u_i \in M_x
\cap T_y$. If $color(u_i, u_{i+1})$ appears in one of the edges
incident on $y$, then $u_i \in A$ otherwise $u_i \in B$.

\vspace{.1cm}

\noindent {\bf Observation 3.}  $|T_y| \ge k - t + |B| $.

Proof: To see this first note that there are at least $k$ edges
of different colors
incident on $y$ (as by assumption, color degree of $y$ is at least
$k$)and at most $t- |B| $ of them can get the colors from $U$,
since $|B|$ colors in $U$ do not appear on the edges incident on
$y$, by the definition of $B$.
So, at least $k - t
+ |B| $ of the edges incident on $y$ have colors from $U^c$, and
clearly any $w$, such that $(y,w)$ is an edge, colored by a color in
$U^c$ has to be on $P$, since otherwise we have a longer rainbow
path. It follows that $|T_y| \ge k - t + |B|$.

\vspace{.1cm}

\noindent{\bf Claim 3.} If $u_i \in A$, then the edge incident on
$y$ with color $color(u_i, u_{i+1})$ has its other end point on the
rainbow path $P$. That is, if $w$ is such that $(y,w)$ is an edge
and $color(u_i,u_{i+1}) = color(y, w)$, then $w \in V(P)$.

Suppose Claim 3 is false. Let $(y, w) \in E(G)$ with
$color(y, w) = color(u_i, u_{i+1})$ and $w \notin V(P)$. Now,
consider the path: $P' = \{ w, y, u_{t-1}, u_{t-2}, \ldots, u_{i+1},
x (=u_0), u_1, \ldots, u_i\}$. Clearly $P'$ is a rainbow path as
$color(u_i,u_{i+1}) = color(y, w)$,  the edge $(u_i,u_{i+1}) \notin
E(P')$ and color$(u_{i+1},x) \in U^c$, since $u_i \in M_x$. Note
that, the length of $P'$ is $t+1$. This is a contradiction to the
fact that $t$ is the maximum length rainbow path in $G$. Hence Claim
3 is true.

\vspace{.1cm}

Now, partition $A$ as follows: if $u_i \in A$, then by the above
claim the edge incident on $y$ with the color $color(u_i, u_{i+1})$
has its other end point say $w$,  on $P$. If $w \in M_x$, then let $u_i
\in A_1$, else $u_i \in A_2$.

\vspace{.1cm}

\noindent {\bf Observation 4.}  $|M_x \cap T_y| = |A| + |B| = |A_1|
+ |A_2| + |B|$.

\vspace{.1cm}

\noindent {\bf Observation 5.}  $|S_y| \ge |A_2|$. To see this,
recall that $S_y = \{v \in V(P) -  (M_x \cup \{ y, u_{t-1} \})   \ : \ color(y, v) \in U 
 \} $. By definition of $A_2$, for each $u_i \in
A_2$ there exists a unique  vertex $w = w(u_i) \in V(P) - M_x$ such
that $(y, w)$ is an edge and $color(y,w) = color(u_i,u_{i+1}) \in
U$. Since $u_i \in A_2 \subset M_x$, we have $i <  t-1$ and thus
color$(u_i,u_{i+1}) \ne color(y,u_{t-1})$. Therefore $w(u_i)$ cannot be $y$ or $u_{t-1}$, for any $u_i \in A_2$.
 It follows that $\{w(u_i)
: u_i \in A_2 \} \subseteq  S_y$, and therefore we have $|S_y| \ge
|A_2|$.

\vspace{.2cm}

\noindent {\bf Claim 4.}   $|A_1| \leq \frac{|M_x|}{2}$.

Recall that, for each $u_i \in A_1$, there is a unique vertex
$w=w(u_i)$ such that $(y,w)$ is an edge with  $color(u_i, u_{i+1})=
color(y,w)$. Moreover, $w \in M_x$, by the definition of $A_1$ and
$A_1 \cup \{ w(u_i) : u_i \in A_1 \} \subseteq M_x$. Note that
$w(u_i)$ is uniquely defined for $u_i$ since it is the end point of
the edge incident on $y$ colored with the color of the edge
$(u_i,u_{i+1})$. Moreover,  $A_1 \cap \{ w(u_i) : u_i \in A_1 \} =
\emptyset$, since $A_1$ contains vertices which are end points of
edges from $y$, colored by the colors in $U^c$ whereas each $w(u_i)$
is the end point of some edge from $y$ which is colored by a color
in $U$. It follows that $2|A_1| \leq |M_x|$. That is, $|A_1 \leq
\frac{|M_x|}{2}$, as required.

Now, substituting $k - t + |B| $ for $|T_y|$ (by Observation 3),
$|A_2|$ for $|S_y|$ (by Observation 5), and $|A_1| + |A_2| + |B|=
|M_x \cap T_y|$ (by Observation 4) in  the  inequality of
Observation 2,  and simplifying we get $|M_x| + k - t  - |A_1|
\leq t - 1 $. Now using $|A_1| \le |M_x|/2$ (Claim 4) and
and simplifying we get $\frac{|M_x|}{2} + k - t  \leq t - 1$. Recall
that $|M_x| \geq k -t$ (Observation 1). Substituting and simplifying
we get, $t \geq \frac{3k + 2}{5}$. It follows that $t  \ge \left\lceil \frac{3}{5} k
\right \rceil $, contradicting the initial assumption. Hence the
Lemma is true.
\end{proof}

\begin{theor}
Let $G$ be an edge colored graph and $d^c(v) \geq k$, for every $v
\in V(G)$. If $t$ is the maximum length of a rainbow path in $G$,
then $t \geq \left \lceil\frac{3k}{5} \right\rceil$.
\end{theor}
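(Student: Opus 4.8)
The plan is to argue by contradiction using Lemma~\ref{lemma2}. Suppose $t<\left\lceil\frac{3k}{5}\right\rceil$. If $k=0$ the statement is trivial, so assume $k\ge 1$; then $\left\lceil\frac{3k}{5}\right\rceil\le k$ and hence $t<k$. By Lemma~\ref{lemma2} there is a rainbow cycle $C$ in $G$ of length $t+1$; write $C=v_0v_1\cdots v_tv_0$ and $c_i=color(v_iv_{i+1})$, with indices read modulo $t+1$, so that $c_0,\dots,c_t$ are pairwise distinct, and put $U_C=\{c_0,\dots,c_t\}$.

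The first step is a rotation observation. For each $i$, deleting the edge $v_iv_{i+1}$ from $C$ leaves the path $v_{i+1}v_{i+2}\cdots v_i$, which has length $t$ and is therefore a maximum-length rainbow path of $G$, with color set $U_C\setminus\{c_i\}$ and endpoints $v_i,v_{i+1}$. The ``Claim'' in the proof of Lemma~\ref{lemma1} says that any edge from an endpoint of a maximum rainbow path to a vertex off that path receives a color already on the path. Applying this at both endpoints, and then at the two cycle edges incident to a fixed vertex $v_j$, we conclude that every edge joining $v_j$ to a vertex of $V(G)\setminus V(C)$ has its color in $U_C\setminus\{c_{j-1},c_j\}$.

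Next, if no vertex of $C$ has a neighbour outside $V(C)$, then each $v_j$ is adjacent only to the other $t$ vertices of $C$, so $k\le d^c(v_j)\le d(v_j)\le t<k$, a contradiction. Hence some $v_j\in V(C)$ has a neighbour $w\notin V(C)$, and by the first step $color(v_jw)=c_m$ for some $m\notin\{j-1,j\}$.

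The remaining step — producing from $C$, $w$ and the edge $v_jw$ a rainbow path of length $t+1$, contradicting the maximality of $t$ — is where the real work lies, and I expect it to be the main obstacle. The obvious try, attaching $w$ at $v_j$ and then running along $C$, is blocked: in either direction one is forced to stop at the edge $v_mv_{m+1}$, whose color $c_m$ is already spent on $v_jw$, so this gives only a rainbow path of length roughly $\frac{t}{2}+1$. The fix I would pursue is to exploit the color degree of $w$: since $d^c(w)\ge k>t$ while $C$ carries only $t+1$ colors, $w$ has a second, suitably colored incident edge (found directly, or by applying Lemma~\ref{lemma1} at $w$), and one should then reroute $C$ through $w$ along two complementary arcs — chosen so that the two ``broken'' cycle edges are exactly the ones whose colors $w$'s two edges reuse — to obtain a rainbow path of length $t+1$ through all of $V(C)\cup\{w\}$. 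The delicate point is controlling the overlap between the colors incident to $w$ and the cycle colors $U_C$, and ensuring that the far endpoint of $w$'s chosen edge is not already on the rerouted path; it is precisely here that the hypothesis $t<\left\lceil\frac{3k}{5}\right\rceil$ should be brought in to guarantee enough room.
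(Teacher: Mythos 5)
Your first two steps are sound and agree with the paper: assume $t<\left\lceil\frac{3k}{5}\right\rceil$, invoke Lemma~\ref{lemma2} to get a rainbow cycle of length $t+1$, and use the rotation argument to conclude that every edge from a cycle vertex to a vertex outside the cycle is colored with a cycle color. But the proof stops exactly where the real difficulty begins, and you say so yourself: you never produce the contradictory rainbow path of length $t+1$. Moreover, the fix you sketch --- rerouting the cycle through a single outside vertex $w$ --- looks like a dead end rather than a deferred detail. If $w$ is inserted between two cycle vertices $v_j$ and $v_{j'}$, the resulting spanning path of $V(C)\cup\{w\}$ must omit exactly the two cycle edges $v_jv_{j+1}$ and $v_{j'}v_{j'+1}$ (or the symmetric pair), so rainbowness forces the two colors $color(wv_j)$, $color(wv_{j'})$ (which by your own rotation observation lie in $U_C$) to coincide, in crossed fashion, with precisely those two omitted colors. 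Nothing in the hypothesis $d^c\ge k$ lets you force such a coincidence at any particular $w$, and an edge from $w$ colored outside $U_C$ only buys you an arc of length about $t/2$, as you noted. So the gap is not a routine verification; it is the absence of the mechanism that actually generates a path longer than $t$.

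The paper's mechanism is different and is worth contrasting with your plan. Instead of working with one outside vertex, delete from $G$ every edge whose color lies in $U_C$; by the rotation observation this disconnects $V(C)$ from the rest, and the induced graph $G''$ on $V(G)\setminus V(C)$ still has minimum color degree at least $k-(t+1)\ge\left\lfloor\frac{2k}{5}\right\rfloor-1$, because only $t+1\le\left\lceil\frac{3k}{5}\right\rceil$ colors were removed. Lemma~\ref{lemma1} (not just its endpoint claim) then supplies, from any outside neighbour $z$ of $u_0$, a rainbow path of length about $\frac{k}{5}$ inside $G''$ whose colors are all \emph{outside} $U_C$. Finally, a pigeonhole over colors: $u_0$ has at least roughly $\frac{2k}{5}$ distinctly colored edges to outside vertices, all colored from $U_C$, while only about $\frac{2k}{5}-\varepsilon$ cycle colors sit within distance $\left\lceil\frac{k}{5}\right\rceil$ of $u_0$ on the cycle; hence some edge $u_0z$ reuses a cycle color that can be discarded by dropping a short arc of the cycle near $u_0$. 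Splicing the surviving arc (length about $t-\frac{k}{5}+1$), the edge $u_0z$, and the $G''$-path (length about $\frac{k}{5}$) gives a rainbow path of length at least $t+1$, the desired contradiction. This second application of Lemma~\ref{lemma1}, to the recolored outside graph, is the idea your proposal is missing; the hypothesis $t<\left\lceil\frac{3k}{5}\right\rceil$ enters through the arithmetic $\frac{3k}{5}=\frac{2k}{5}+\frac{k}{5}$ in this splicing, not through any rerouting of the cycle itself.
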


\begin{proof}
If possible suppose $t < \left \lceil \frac{3
k}{5}\right\rceil $. By Lemma \ref{lemma2}, $G$ contains a
rainbow cycle of length $t+1$. Let $CL$ be this cycle. Note that,
$CL$ contains $(t+1)$ vertices and $(t+1)$ edges. Now, $t+1 \leq
\lceil \frac{3 k}{5} \rceil  $. Let $CL = \{u_0, u_1, \ldots,
u_t,  u_0\}$ and $V(CL^c) = V(G) \setminus V(CL)$. Let $U =
\{color(e) \ : \ e \in E(CL)\}$ and $U^c = C \setminus U$, where $C$
is the set of colors used to color the edges of $G$. Let
$F_i = \{ z \in V(CL^c) : (u_i,z) \in E(G)\}$.

\vspace{.1cm}

\noindent {\bf Claim 1.} $|F_i| \ge \left\lfloor\frac {2 k}
{5}\right\rfloor $. Moreover, for $z \in F_i$, $color(u_i, z) \in
U$.

First part follows from the fact that the  degree of $u_i \ge 
d^c(u_i) \geq k$ and there are only at most $\left \lceil\frac {3
k} {5} \right \rceil $ vertices in $CL$ and therefore $u_i$
has only at most $\left \lceil \frac {3k} {5} \right \rceil - 1$
neighbours in CL. If possible suppose
$color(u_i, z) \in U^c$. Now consider the path $P' = \{z, u_i,
u_{i+1}, u_{i+2}, \ldots,$ $u_{t},u_0,\ldots, u_{i-1} \}$. Clearly,
$P'$ is a rainbow path as $ color(u_i, z) \in U^c$ and $\{u_i,
u_{i+1}, u_{i+2}, \ldots, u_{t},$ $u_0, \ldots, u_{i-1} \}$ is
already a rainbow path being a part of the rainbow cycle $CL$. Note
that, the length of $P'$ is $t+1$. This is a contradiction to the
assumption that $t$ is the length of the 
 maximum length rainbow path in $G$. Hence
Claim 1 is true.

\vspace{.1cm}

Let $G' = (V', E')$, where $V'(G') = V(G)$ and $E'(G') = E(G)
\setminus \{e \in E(G) : \  color(e) \in U\}$. Clearly, in $G'$
there is no edge between $V(CL)$ to $V(CL^c)$, since by Claim 1,
every such edge is colored by a color in $U$. 
 Let $G'' = G'[V(CL^c)]$
be  the induced
subgraph on $V(CL^c)$ in $G'$ and let $k'$ be the minimum color
degree of $G''$.

\vspace{.1cm}

\noindent {\bf Observation 1.} $k' \ge \left\lfloor \frac {2k}
{5}\right\rfloor  - 1$.

Proof: Clearly  $k' \geq k - |U| = k - (t+1) \ge k - \left\lceil
\frac{3 k}{5}\right\rceil  \geq \left\lfloor
\frac{2k}{5}\right\rfloor  - 1$.

\vspace{.2cm}


\noindent Consider the following subset $U_0$ of $U$, defined by
$U_0 = U_1 \cup U_2$, where $U_1 = \{ color(u_i,u_{i+1}) : 0 \le i
\le \lceil \frac {k} {5} \rceil  -1\}$ and $U_2 =  \{ color (u_i,
u_{i+1}) : (t+1) - \left \lceil \frac {k} {5} \right \rceil \le i
\leq t-1 \} \cup \{color(u_t, u_0)\} $.

\vspace{.1cm}

\noindent {\bf Claim  2.}  $ \{ color(u_0,z) : z \in F_0\}
 \cap U_0 = \emptyset$.

Suppose not. Let $z \in F_0$ be such that $color(u_0,z)  \in U_0$.
Without loss of generality assume that $color(u_0,z) \in U_1 $. Then
consider the path $P^*$ = $(u_{\left \lfloor \frac {k} {5} \right
\rfloor}, \ldots, u_{t}, u_0,z)$, which is clearly a rainbow path,
since the edge of $CL$ with its color equal to $color(u_0,z)$ is not
there in this path. Also the length of $P^*$ is $t+2 - \left \lfloor
\frac {k} {5} \right \rfloor$. By Observation 1, $G''$ has minimum color
degree at least $\left\lfloor \frac {2 k} {5}\right\rfloor  - 1$, and
therefore by Lemma \ref {lemma1}, $G''$ has a rainbow path of length
at least $\left\lceil\frac {k} {5}\right\rceil - 1$, let us call this
path $P''$. Clearly concatenating the path $P''$ with $P^*$ we get a
rainbow path since colors used in $P^*$ belong to $U$ whereas the
colors used in $P''$ belong to $U^c$. Moreover, the length of this
rainbow path is at least $t+1$, a contradiction, to the assumption
that $t$ is the length of the maximum rainbow path in $G$.

Now we complete the proof as follows: In view of Claim 2, and Claim
1, we know that  the number of colors in $F_0$ is $ \leq |U - U_0|$. 
But $|U - U_0| \le
\left\lceil\frac{3 k} {5}\right\rceil  - ( 2 \left\lceil\frac { k }
{5}\right\rceil   )\leq \frac {3k} {5} + 1 - \frac {2k} {5} 
 \leq \left\lceil\frac{k}{5}\right\rceil + 1  <
\left\lceil\frac {2 k} {5}\right\rceil$ (since we can assume $k > 
5$: for smaller values of $k$, the Theorem is trivially true). This
is a contradiction to the first part of Claim 1.
\end{proof}

\end{document}